\newtheorem{theorem}{Theorem}[section]
\newtheorem{lemma}[theorem]{Lemma}
\begin{document}

\setcopyright{acmcopyright}






%

\title{Analyzing Branch-and-Bound Algorithms for the Multiprocessor Scheduling Problem}

%
%
%
%
%

\numberofauthors{3} 
%
\author{
%
%
\alignauthor 
Thomas Lively\\
       \affaddr{Harvard University}\\
       \affaddr{Cambridge, MA}\\
\alignauthor
William Long\\
\affaddr{Harvard University}\\
       \affaddr{Cambridge, MA}\\
\alignauthor 
Artidoro Pagnoni\\
\affaddr{Harvard University}\\
       \affaddr{Cambridge, MA}\\
}


\maketitle
\begin{abstract}
The Multiprocessor Scheduling Problem (MSP) is an NP-Complete problem with significant applications in computer and operations systems. We provide a survey of the wide array of polynomial-time approximation, heuristic, and meta-heuristic based algorithms that exist for solving MSP. We also implement Fujita's state-of-the-art Branch-and-Bound algorithm \cite{fujita} and evaluate the benefit of using Fujita's binary search bounding method instead of the Fernandez bound \cite{fernandez}. We find that in fact Fujita's method does not offer any improvement over the Fernandez bound on our data set.

\end{abstract}

%
%

%
%

%
%


\section{Introduction}
The Multiprocessor Scheduling Problem (MSP) is the problem of assigning a set of tasks ${j_1, j_2,...j_n}$ to a set of processors ${p_1, p_2,...p_m}$ in such a way that the makespan, or total time required for the completion of the resulting schedule is as small as possible. The tasks may have arbitrary dependency constraints, so they can be modeled as a DAG in which tasks correspond to vertices, and edges encode dependencies between tasks. MSP has been well studied in both theoretical computer science and operations research. Its applications range from industrial project management to tasking cloud-based distributed systems. 

MSP is one problem in a large taxonomy of scheduling problems. Similar problems take into account heterogeneous processors, multiple resource types, communication cost between processors, and the amount of information known the the scheduler. Work on these variants is described in Section 1.3. We chose to focus our work on the basic MSP instead of one of its more esoteric cousins because we are ultimately interested in doing exactly what the problem describes: scheduling multiprocessors.

Before describing Fujita's branch and bound algorithm and our implementation and analysis of it, we provide an introduction to the terminology and notation used to describe MSP and other scheduling problems. We also give a brief survey of the approximate and exact methods and algorithms used to solve MSP.

\subsection{Graham's Notation}
Graham proposed a widely used notation \cite{graham:notation} for succinctly classifying scheduling problems. In Graham's notation a scheduling problem is described in three fields as in $\alpha | \beta | \gamma$. The $\alpha$ field describes the number of processors, $\beta$ describes task configuration options, and $\gamma$ describes the objective function. 

In particular, $\alpha$ is  $Pn$ if we have $n$ identical processors, $Qn$ if we have $n$ uniform processors meaning that each processor has a different compute speed, and $Rn$ if we have $n$ unrelated processors meaning that each processor has a different compute speed for each task. When there is no $n$, the problem is for any number of processors.

$\beta$ is a set that may contain any number of the following options: $r_j$ if tasks have specified release dates, $d_j$ if they have deadlines, $p_j = x$ if each task has weight $x$, $prec$ if tasks have general precedence constraints, and $pmtn$ if tasks can be preempted, meaning they can be stopped and resumed arbitrarily, even moving to other processors.

Finally, $\gamma$ can be any number of different objective functions including the makespan denoted by $C_{max}$, the mean flow-time (completion time minus release date) denoted by $\sum C_i$, or maximum lateness $L_{max} = max(0, C_i - d_i)$.

\subsection{Model}
For our purposes, we are primarily interested in the NP-hard $Pn| prec | C_{max}$ problem. In this precedence-constrained problem, the task graph can be represented as DAG where each vertex $u$ is associated with a task cost $c(u)$ and each edge $(u, v) \in E$ implies that task $v$ can be started only after $u$ is finished. 

Without loss of generality, we can require that the DAGs we schedule contain a single source vertex and a single sink vertex. If there is no unique sink or source  in the DAG, we can simply append a vertex source $u$ with weight $c(u)=0$ as a predecessor to all vertices with zero in-degree and a sink vertex $v$ with $c(v)=0$ as a successor of all vertices with zero out-degree to enforce this requirement.

We adopt the definitions and notation used by Fujita to describe the problem. The only difference is that Fujita considers a generalization of the MSP in which there is allowed to be a communication cost associated with scheduling a successor task on a different processor than its predecessors. This more realistically models the application of scheduling tasks on modern NUMA machines, but we omit communication costs from our model for simplicity.

In our model, we say that a schedule of our task graph $G$ on $p$ processors is a mapping from a vertex $v$ to a tuple $(p, \tau)$ where $p$ is a processor which will process $v$ on the time interval $[\tau, \tau + c(v)]$.

\vspace{2mm}
$\textbf{Definition 1 (Feasible Solution)\cite{fujita}.}$ A Schedule $f$ is said to be feasible, if it satisfies the following two conditions:
\begin{enumerate}
    \item For any $u,v \in V$, if $f(u) = (p, \tau')$ and $f(v) = (p, \tau'')$, then $\tau' + c(u) \leq \tau''$ or $\tau'' + c(v) \leq \tau'$. 
    \item For any $(u,v) \in E$, if $f(u) = (p', \tau')$ and $f(v) = (p'', \tau'')$, then $\tau'' \geq \tau' + c(u)$
\end{enumerate}
\vspace{2mm}
The makespan of $f$ is defined to be the completion time of the exit task $v$ under schedule $f$. The static cost of a path in $G$ is defined as the summation of the execution costs on the path. A path with a maximum static cost is called a critical path in G. Furthermore, we call $t_{cp}$ the static cost of a critical path in G. Lastly, we define 

\vspace{2mm}
$\textbf{Definition 2 (Topological Sort)\cite{fujita}.}$ A topological sort of $G = (V, E)$ is a bijection $\phi$ from $V$ to ${1, 2,...|V|}$ such that for any $u, v \in V$, if $u$ is a predecessor of $v$, then $\phi(u) < \phi(v)$.
\vspace{2mm}

This representation of the precedence constraints will be useful in describing our Branch-and-Bound algorithm. It also helps us define the concept of a partial solution.

\vspace{2mm}
$\textbf{Definition 3 (Partial Solution)\cite{fujita}.}$ Let the graph $G(V,E)$ represent the precedence constraints. A partial solution $x$ is a feasible schedule for a subset of the vertices in $G$. Let $U$ be this subset, the we have that $\phi(u) < \phi(v)$ $\forall u \in U$ and $\forall v \in V$. 
\vspace{2mm}

We note that a solution or a partial solution can be represented as a permutation of the vertices that it schedules. A permutation uniquely represents a schedule, and a partial permutation uniquely represents a partial schedule. To derive a schedule from a partial permutation of the vertices, we iterate through the permutation and assign each task to the first available machine once all its predecessors have finished their execution. Since we only consider those permutations that form feasible partial schedules, we know when we choose how to assign a task that all of its predecessors have already been assigned in the schedule.

\subsection{Known Solutions}
To contextualize our work in the current state of the field, we mention several other scheduling problems similar to MSP and list their best-known runtimes \cite{brucker:textbook}. While the general $P | prec | C_{max}$ problem is NP-hard, some variants are easily solved while others are polynomial but have very high degree. Among the problems known to be solvable in polynomial time are:
\begin{itemize}
    \item $P | p_i = p; tree | C_{max}$ which Hu \cite{hu} solved in $O(n)$.
    \item $P2 | p_i = 1; prec; r_i | \sum C_i$ for which Baptiste and Timkowski \cite{baptiste} found an $O(n^9)$ solution
    \item $R || \sum C_i$ which was solved in $O(mn^3)$ \cite{bruno}
\end{itemize}

On the other hand, the best-known solutions for variants like $Qm | r_i | C_{max}$ are run in pseudo-polynomial time \cite{lawler}and even simplified versions like $P | p_i = 1; prec | C_{max}$ are known to be NP-hard. 

Solutions to this intractable problem have migrated towards approximation schemes. These schemes fall into three categories. The first category encompasses standalone approximation algorithms for the online problem like the guessing scheme of Albers et al \cite{albers:approx} that accomplishes a $(4/3 + \epsilon)$-competitive algorithm building a polynomial number of $O((m/\epsilon)^{O(log(1/\epsilon)/\epsilon)})$ schedules. Integer Programming approaches have also proven to be feasible for graphs with 30-50 jobs \cite{patterson}. The second category are heuristics based on Graham's original List Scheduling Algorithm \cite{graham:list}. However, the accuracy of these approximation strategies is limited. In fact, it has been shown by Ullman that if an approximation scheme for MSP can achieve better than $(4/3 + \epsilon)$, then it could be shown that $P = NP$ \cite{ullman}. The third category consists of meta-heuristic strategies. We expand on the last two strategies here.

\subsection{List Scheduling}
This algorithm is essentially a greedy strategy that maintains a list of ready tasks (ones with cleared dependencies) and greedily assigns tasks from the ready set to available processors as early as possible based on some priority rules. Regardless of the priority rule, List Scheduling is guaranteed to achieve a $2 - 1/m$ approximation. This result can be proved quite simply:

\begin{lemma}
List Scheduling with any Priority Rule achieves a $(2 - 1/m) OPT$ approximation
\end{lemma}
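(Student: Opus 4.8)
The plan is to bound the list-scheduling makespan $C_{max}$ against two elementary lower bounds on $OPT$. First, the \emph{load bound}: writing $W := \sum_{u\in V} c(u)$ for the total work, any feasible schedule runs all of $W$ on $m$ processors, so its makespan is at least $W/m$; hence $OPT \ge W/m$. Second, the \emph{critical-path bound}: the tasks along a critical path must be executed one after another because of the precedence edges of Definition~1, so $OPT \ge t_{cp}$. Both are immediate, and they are the only facts about $OPT$ the argument uses.

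Next I would extract from the list schedule a chain of tasks that witnesses the makespan. Let $j_r$ be a task finishing last, at time $C_{max}$, and for any task $j_i$ let $[\sigma_i,f_i]$ be its execution interval. Given $j_i$, if $\sigma_i=0$ or every processor is busy throughout $[0,\sigma_i)$ we stop; otherwise let $t^\ast$ be the supremum of the times in $[0,\sigma_i)$ at which some processor is idle. Since $j_i$ has not started by $t^\ast$ while a processor is free there, the greedy rule forces $j_i$ to be blocked, i.e.\ it has an unfinished predecessor; let $j_{i-1}$ be the predecessor of $j_i$ whose finish time $f_{i-1}$ is largest, so $f_{i-1}>t^\ast$ and, by precedence, $f_{i-1}\le \sigma_i$. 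I then claim all processors are busy on $[f_{i-1},\sigma_i)$: otherwise at some idle instant $t$ in that interval all predecessors of $j_i$ are already done (they finish by $f_{i-1}\le t$) and a processor is free, so list scheduling would have started $j_i$ by time $t<\sigma_i$ --- a contradiction. Iterating downward produces a chain $j_1\to j_2\to\cdots\to j_r$ along precedence edges such that every idle instant of $[0,C_{max}]$ lies inside $\bigcup_i[\sigma_i,f_i]$; equivalently, every instant outside these chain intervals is \emph{saturated} (all $m$ processors busy).

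With the chain in hand the accounting is short. The chain intervals are pairwise disjoint (each $j_{i+1}$ starts after $j_i$ finishes), with total length $L := \sum_i c(j_i) \le t_{cp}$, and the saturated region has total length $G := C_{max}-L$. During the saturated region exactly $mG$ units of processor-time are spent on tasks, and during the chain intervals at least the chain tasks themselves are running, contributing $L$; hence $W \ge mG + L$, so $G \le (W-L)/m$. Therefore
\begin{align*}
C_{max} &= G + L \le \frac{W-L}{m} + L = \frac{W}{m} + \Bigl(1-\frac{1}{m}\Bigr)L \\
&\le OPT + \Bigl(1-\frac{1}{m}\Bigr)OPT = \Bigl(2-\frac{1}{m}\Bigr)OPT,
\end{align*}
using the load bound for $W/m$ and the critical-path bound for $L$.

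The one delicate step is the chain construction: one must argue carefully, using only the greedy ``assign a ready task to a free processor'' property --- and nothing about the tie-breaking priority rule, which is exactly why the guarantee is rule-independent --- that every idle instant is covered by the execution interval of some chain task (the handling of the supremum $t^\ast$ when it is not attained is the fussiest point, and is harmless once one notes there are only finitely many scheduling events). Everything after that is bookkeeping. I would close by remarking that the bound is tight, so no slicker argument can beat it: the instance with $m(m-1)$ unit tasks followed by a single task of length $m$ admits a schedule of makespan $m$, while an adversarial priority order yields makespan $2m-1=(2-\tfrac1m)\,OPT$.
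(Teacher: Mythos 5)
Your proof is correct and follows the same route as the paper: Graham's argument combining the load lower bound $OPT \ge W/m$ with the critical-path lower bound, via a chain of tasks whose execution intervals cover every idle instant. If anything, your version is the more careful one --- the paper asserts that \emph{any} path has this covering property, which is false as stated, whereas your backward construction of the chain from a last-finishing task through blocking predecessors is exactly the step needed to make the argument rigorous.
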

 
\begin{proof}
Given a scheduling of jobs on $m$ processors with makespan $M$ where the sum of all task weights is $S$, we can choose any path and observe that at any point in time, either a task on our path is running on a processor, or no processor is idle. We call $I$ the total idle time and $L$ is the total length of our path. Consequently, we know that: 

    \begin{itemize}
        \item $I \leq (m-1)L$ since processors can be idle only when a task from our path is running.
        \item $L \leq M_{OPT}$ since the optimal makespan is longer than any path in the DAG
        \item $M_{OPT} \geq S/m$ since $S/m$ describes the makespan with zero idle time
        \item $m \times M = I + S$ since the idle time plus sum of all tasks must give us the total "time" given by makespan times number of processors
        \item $m \times M \leq (m-1)M_{OPT} + m M_{OPT}$ implies that $M \leq (2-1/m) M_{OPT}$
    \end{itemize}
\end{proof}

One important priority rule is the Critical Path heuristic which prioritizes tasks on the Critical Path, or longest path from the task to the sink. Other classical priority rules include Most Total Successors (MTS), Latest Finish Time (LFT), and Minimum Slack. Consider, for example, Figure 1.

\begin{figure}
    \centering
    \includegraphics[width=100px]{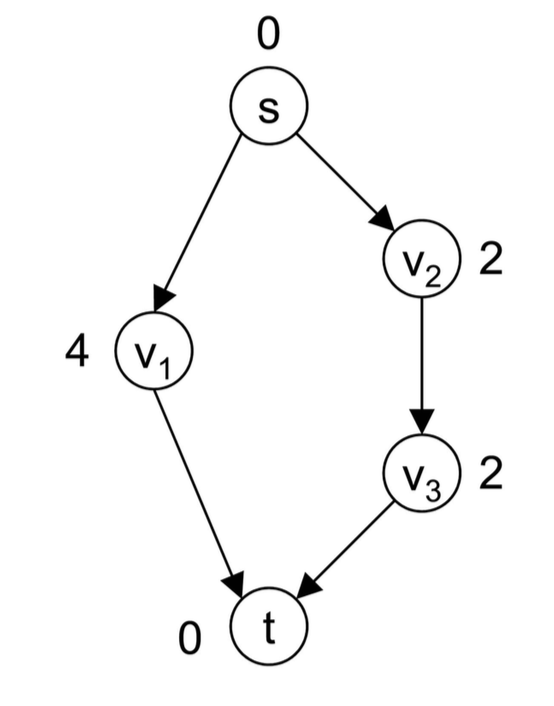}
    \caption{DAG of Tasks and Precedence Constraints}
\end{figure}

When at the source node $s$, List Scheduling would maintain a ready set with tasks $v_1$ and $v_2$. With a Latest Finish Time priority rule, $v_1$ would be first assigned to a processor since it finishes at 4 time steps. With a Critical Path heuristic, either task could be selected since the maximum-length path to the sink vertex is 4 for any path taken. 

Kolisch \cite{kolisch:priority} gives an analysis of four modern priority rules: Resource Scheduling Method (RSM), Improved RSM, Worst Case Slack (WCS), and Average Case Slack (ACS) with better experimental accuracy. In particular, he found that WCS performed best, followed by ACS, IRSM, and LFT. Our List Scheduling implementation utilizes this type of priority rules and attempts to improve upon them by combining them with a branch-and-bound algorithm.

\subsection{Meta-Heuristics}
More recently, research has moved towards using meta-heuristics, a high-level problem-independent algorithmic framework that provides a set of guidelines or strategies to develop heuristic optimization algorithms. For MSP, several strategies have been proposed including utilizing simulated annealing \cite{bouleimen:99}, genetic algorithms \cite{auyeung:genetic}, and even Ant-Colony optimization \cite{selvan}. 

While these meta-heuristics can provide modest improvements in most cases, the largest increases in efficiency are accomplished when heuristics are customized to the MSP problem structure. These meta-heuristics also fail to give a guarantee on the quality of the result, and can converge to local optima. While meta-heuristics can give decent approximations in sub-exponential time, in some situations, obtaining an exact optimal solution is desirable. 

\section{BRANCH AND BOUND METHOD}
The branch-and-bound (BB) method, which is essentially a search algorithm on a tree representing an expansion of all possible assignments, provides an exact solution to MSP. In general, the BB method attempts to reduce the number of expanded sub-trees by pruning the ones that will generate worse solutions than the current best solution. This reduces the number of solutions explored, which would otherwise grow with factorial of the number of nodes. 

Given a graph $G(V,E)$, with an associated partial ordering $\phi$ we can construct the following search tree. The source of the tree is a partial solution only containing the source node of the graph. Each node in the tree corresponds to a partial solution $x$ with respect to a subset $U \subseteq V$, under the form of a permutation of vertices. This means that $x$ provides a scheduling for the nodes in $U$. The leaf nodes are complete feasible solutions. A children of a partial solution $x$ is itself a partial solution that schedules all nodes according to the solution $x$ and also schedules an additional node. Formally, all children of a partial solution $x$ with respect to a subset $U \subseteq V$ are partial solutions with respect to a subset $U \cup \{u\}$ such that $\phi(v) < \phi(u)$ $\forall v \in U$. This means that each vertex that has all its predecessors already scheduled will lead to a new children node and start a new sub-tree. Many nodes will produce schedules with respect to the same subset of vertices. However, they will represent different permutations of the vertices in the subset. The leaves of the tree will contain all permutations of the vertices that lead to feasible schedules. This derives directly from our construction of the graph.

In the BB method, we explore the tree with a depth first search approach. The initial node is the source of the tree, which only contains the trivial schedule for the source of the graph. We expand subsequent nodes according to a priority rule of the same type of those described above. The priority rule that we adopt in our implementation is HLFET (highest level first). Fujita \cite{fujita} also uses the same priority rule in his implementation of the BB algorithm. Both Adams and Canon have studied the performance and robustness of priority rules \cite{Adam, Canon}  in the context of the List Scheduling algorithm described in the previous section. In both numerical experiments the authors have shown that HLFET performs consistently well. Other priority rules and heuristic methods that produce a better estimations of the best node to expand next have been developed, e.g. genetic, and simulated annealing methods. These algorithms give better results compared to the simple priority rules \cite{Auyeung, bouleimen:99}. They are therefore generally used in approximation algorithms for the MSP problem, such as the Grahm's List Scheduling algorithm \cite{graham:list}. These methods also require a significantly longer computation time compared to HLFET. For the BB algorithm, since the heuristic has to be evaluated at every node of the search tree, such computationally expensive methods do not produce any beneficial results. 

In our implementation, the priority rule HLFET assigns a level to every vertex in the graph. The level of a vertex is defined as the sum of the weights of all vertices along the longest path from the vertex to the sink. The search part of the BB algorithm is therefore a depth-first-search algorithm where the priority of nodes in the queue is determined according to HLFET. At each step the BB algorithm expands the node with highest priority first. Intuitively, this will prioritize nodes that have a long list of dependent tasks. A naive search of this type without any bounding component would require to visit all leaf nodes in the search tree. This corresponds to evaluating the schedule quality of all permutations leading to a feasible result, which grows as $O(n!)$ where $n = |V|$.

The core idea of the branch-and-bound algorithm is to prune off all sub-trees that are guaranteed to generate worse solutions than the current best solution. This will significantly reduce  the number of nodes that are expanded in practice. We now need to find a method that produces such a guarantee - the difficulty being that it has to be a guarantee on all solutions that can be reached in a given sub-tree. In the next section we describe two methods to find a lower bound on the makespan of all complete feasible solutions based on a given partial solution. 

It is interesting to note that the BB algorithm generates the solution produced by Graham's list scheduling algorithm \cite{graham:list} with priority queue HLFET as its first solution. The first path expanded in the BB algorithm is composed of the sequence of ready nodes with highest priority at each step, just like in Graham's list scheduling algorithm. The priority rule ensures that the search starts with a good estimate of the optimal solution, and maximizes the number of sub-trees that are pruned.

\section{Fernandez and Fujita Bounds}
We present here the two lower bounding techniques that we implemented. We first describe the Fernandez bound \cite{fernandez}, which is a generalization of Hu's bound \cite{hu} among others. Then we explain the Fujita Bound \cite{fujita}, which generally produces a better lower bound than Fernandez, but is more computationally expensive. Both of these bounds rely on estimating the minimum number of machines required to keep the makespan under a certain total time.

\subsection{Fernandez Bound}
We first need to define $S_x$ the set of complete feasible solutions that can be reached by expanding a given partial solution $x$. All solutions in $S_x$ are represented by permutations in which the initial vertices are exactly the same vertices as in the permutation representing $x$.

Suppose we are given some partial solution $x$, we will now show how to obtain a lower bound on the makespan of all schedules in $S_x$. Fujita \cite{fujita} does not define the quantities correctly, which is very misleading. We are going to follow the logic and definition directly from Fernandez, but stick to the simpler notation employed by Fujita. Let $\theta$ be a subinterval of $\subseteq [0,t_{cp})$ and let $\sigma \in S_x$ be a permutation defining a complete solution in $S_x$.

Suppose that we want to impose a bound on the makespan. Let this bound be $t_{cp}$, the size of the critical path. We define the absolute minimum start time and maximum end time of a task to be respectively the earliest time a task could start executing given its precedence constraints and the latest completion time of a task in order to ensure that its successors can complete within $t_{cp}$. We will refer to these two quantities as $\text{mnEnd}$ and $\text{mxStart}$. Note that these quantities are completely determined from the graph of precedence constraints and do not depend on the number of machines.

We can formally define $\text{mnEnd}$ and $\text{mxStart}$ recursively, which provides an $O(n)$ method for their computation:

\begin{eqnarray}
    \text{mnEnd}(u) &=& c(u) + \max_{v\in V_p(u)} \text{mnEnd}(v)\\
    \text{mxStart}(u) &=& \min \left\{t_{cp}, \min_{v\in V_s(u)} \text{mxStart}(u)\right\}    
\end{eqnarray}
where $V_s(u)$ and $V_p(u)$ are respectively the set of successors and predecessors of $u$.

To determine the previous quantities but given a partial schedule $x$, we fix the start and end times of the tasks in $x$ and calculate $\text{mxStart}$ and $\text{mnEnd}$ with these additional constraints. For vertices that are not in the partial schedule $x$, we note that $\text{mxStart}$ does not depend $x$. On the other hand, $\text{mnEnd}$ depends on $x$ even for nodes that are not in the partial  $x$. Note that the the dependence on the number of machines only comes from the estimation of the execution times of the tasks in the partial schedule $x$.

Consider schedules in $S_x$. We are interested in finding the minimum active time across all machines during a certain interval $\theta$, while bounding the makespan of a the schedules $\sigma\in S_x$ to $t_{cp}$. We define this quantity as $R(\theta)$, and will refer to it as the minimum density function. We will calculate $R(\theta)$ using the previous definitions of $\text{mnEnd}$ and $\text{mxStart}$, we show the detailed derivation at the end of this section. Given this quantity, we could determine the minimum number of machines needed to terminate in time $t_{cp}$ with the following equation:
\begin{equation}
    m_L(t_{cp}) = \max_{\theta \subseteq [0,t_{cp})}\left\lceil\frac{R(\theta)}{|\theta|}\right\rceil
    \label{eq:1}
\end{equation}
If the number of machines that we have available is greater than $m_L(t_{cp})$, the length of the critical path is the best bound that we can give using this approach. Let $m$ be the number of machines that we are given. If $m_L(t_{cp}) < m$, we can find a better upper bound using the approach described by Fernandez \cite{fernandez}. The Fernandez bound on the makespan is $t_{cp} + q$ where $q$ is defined as:
\begin{equation}
    q = \max_{\theta \subseteq [0,t_{cp})}\left\lceil -|\theta| + \frac{R(\theta)}{m}\right\rceil
\end{equation}
Intuitively, we don't have enough machines to complete in $t_{cp}$. During the interval of time that requires most machines, which is the interval with the largest minimum activity, it will take us more time than $\theta$ since we don't have as many machines. We therefore add this extra work $q$ averaged out across all machines to $t_{cp}$.

\subsection{Fujita Bound}
The bound proposed by Fujita relies on equation \ref{eq:1}. The general idea is that we will vary our bound to calculate $\text{mxStart}$ and $\text{mnEnd}$ and find the largest time such that $m < m_L(t_{cp})$. This will certainly be a lower bound on the makespan, since we find the highest time such that the solution is guaranteed to still not be feasible as we don't have enough machines. The Fujita bound relies on calculating $m_L(T)$ multiple times, and is therefore more computationally intensive.

There are two steps in finding this bound. The first step consists in finding the interval within which the bound lies, and then we use binary search to determine the highest time $T$ such that $m < m_L(t_{T})$. Here again, Fujita made an error which makes the logic of the algorithm wrong (the signs of the inequalities are in the wrong direction).

To find an interval, we evaluate $m_L(t_{cp} + \Delta)$ for $\Delta = 1, 2, 4, 8 ...$, until we get $m_L(T) < m$. This gives us the interval $[t_{cp} + 2^{n-1}, t_{cp}+ 2^n)$ within which the bound lies. We then use binary search in this interval and find the highest time $T$ such that $m < m_L(t_{T})$. This requires a total time of $O(\log\Delta_{\text{final}})$.

\subsection{Minimum density function}
Now we just have to show how to determine the minimum density function $R(\theta)$ given a partial schedule $x$ and a time bound $T$. The minimum density function is the minimum active time across all machines during a certain interval $\theta$, while bounding the makespan of the schedules $\sigma\in S_x$ to $T$. 

Let $A$ be the list of all $\text{mnEnd}(u)$ and let $B$ be the list of all $\text{mxStart}(u)$ $\forall u \in V$. We create a sorted list $C$ by merging in linear time the two sorted lists $A$ and $B$. The two lists $A$ and $B$ are constructed recursively, and are sorted by construction.

We now notice that the density function will change only at the time instances corresponding to elements $t_i$ of $C$. This is because the set of tasks that could intersect the interval $\theta$ change only at time instances $t_i\in C$. Furthermore, as shown by Fernandez and Fujita, both $R(\theta)/|\theta|$ and $R(\theta)/m - |\theta|$ decrease monotonically as we increase $\theta$. We will therefore only consider the elements $t_i$ of $C$ as possible limits for the interval $\theta$. 

We then have that the minimum density function is the minimum intersection between the execution time of jobs and the interval $\theta_{ij}$. The only jobs that will be considered are jobs that are necessarily intersecting the interval. We then only take the minimum intersection for each of them. We then define $A^*$ as the set of tasks $u \in V$ such that $t_i < \text{mnEnd}$ and $B^*$ as the set of tasks such that $t_j > \text{mxStart}$. The intersection $A^* \cap B^*$ is the set of tasks that necessarily intersect the interval $\theta_{ij}$. Using the set $A^*\cap B^*$ we can determine the minimum density function: 
\begin{equation}
    R(\theta_{ij}) = \sum_{u \in A^* \cap B^*} \min \{\text{mnEnd} - t_i, c(u), t_j - \text{mxStart}, t_j -t_i\}
\end{equation}
Where $c(u)$ is the weight of task $u$. We see that for each intersecting job, we take the minimum intersection time to be factored in the minimum density function.

This computation takes $O(n)$ in our implementation, which makes the computation of the Fernandez bound $O(n^3)$. In the Fujita bound, we have to repeat this $O(n^3)$ computation to find the correct interval and to search the optimal time bound. Our implementation is publicly available at \cite{github}

\section{Experiments}
To evaluate our implementation, we run it on DAGs generated with the RanGen project generator \cite{Demeulemeester}. Although RanGen produces problem instances for project scheduling problems that contain multiple resource types, we simply set the number of resources to zero to generate DAGs appropriate for our problem. To control the complexity of the generated DAGs we set the order strength parameter in RanGen to 0.1. The order strength is the number of precedence constraints in the generated DAG divided by the largest possible number of precedence constraints. We found that setting order strength of 0.1 produced reasonable-looking DAGs that had plenty of edges but were still solvable on a reasonable number of machines by our implementation in a reasonable amount of time. Although it is unclear that the quality of our implementation run on randomly generated DAGs exactly corresponds to its quality when run on real problems, we believe that being able to control precisely the size and complexity of our test set lets us more thoroughly evaluate and understand the performance of the algorithm.

\begin{figure}[htpb]
\includegraphics[width=8cm]{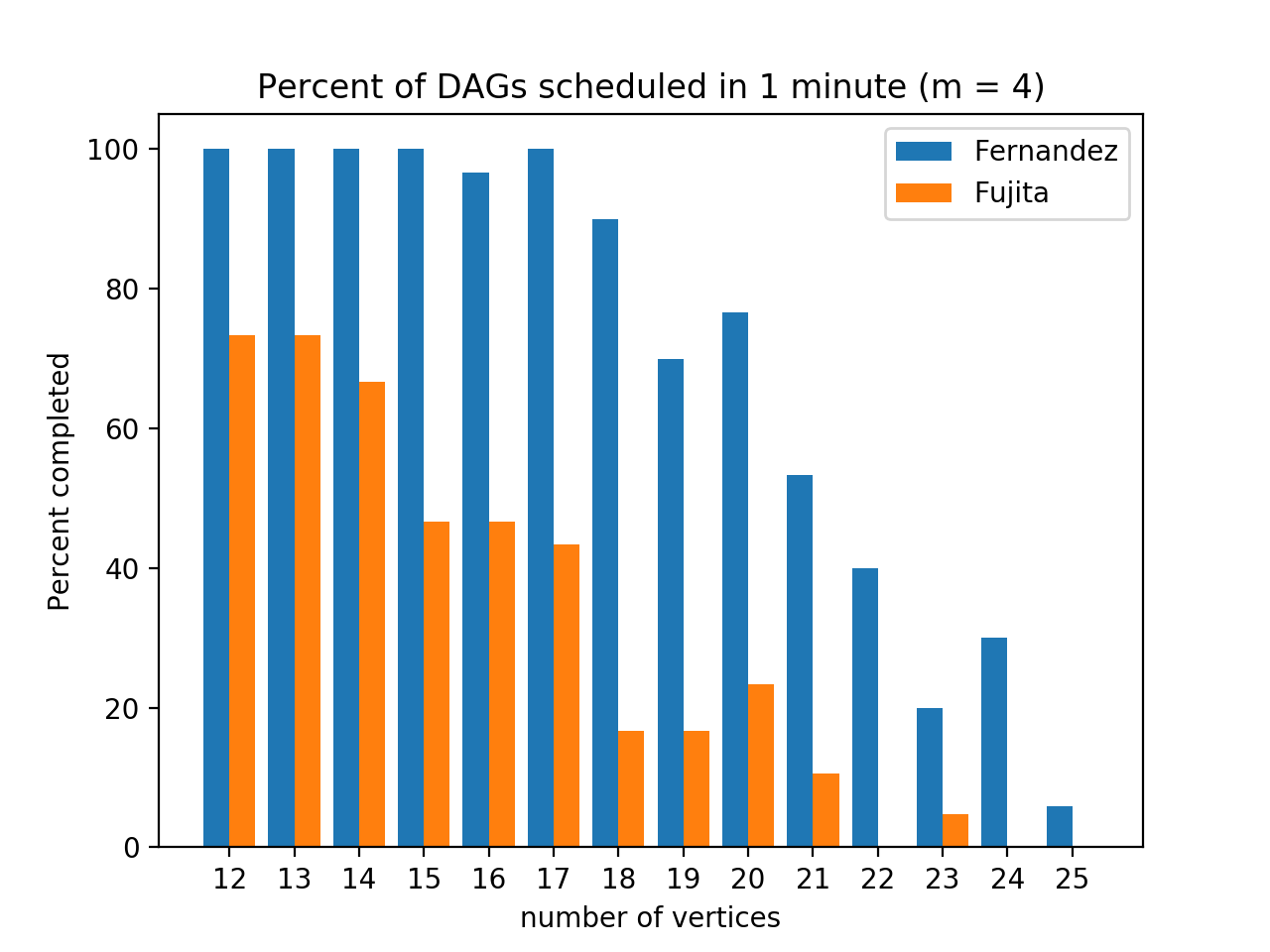}
\centering
\caption{The percent of DAGs of different sizes that were scheduled on four machines in less than one minute using the Fernandez bound and Fujita's binary search bound}
\label{fig:smallcompleted4}
\end{figure}

\begin{figure}[htpb]
\includegraphics[width=8cm]{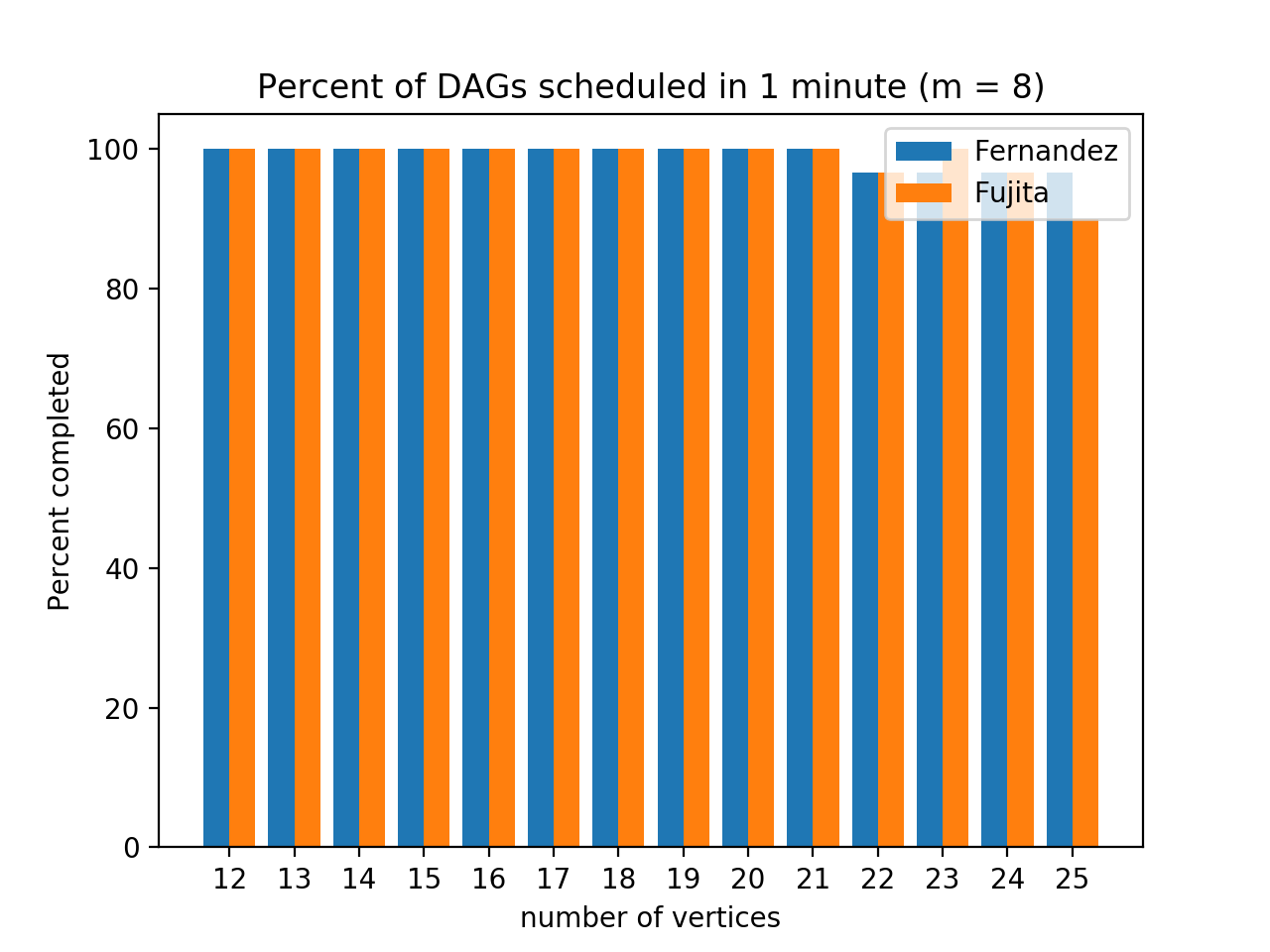}
\centering
\caption{The percent of DAGs of different sizes that were scheduled on eight machines in less than one minute using the Fernandez bound and Fujita's binary search bound}
\label{fig:smallcompleted8}
\end{figure}

\begin{figure}[htpb]
\includegraphics[width=8cm]{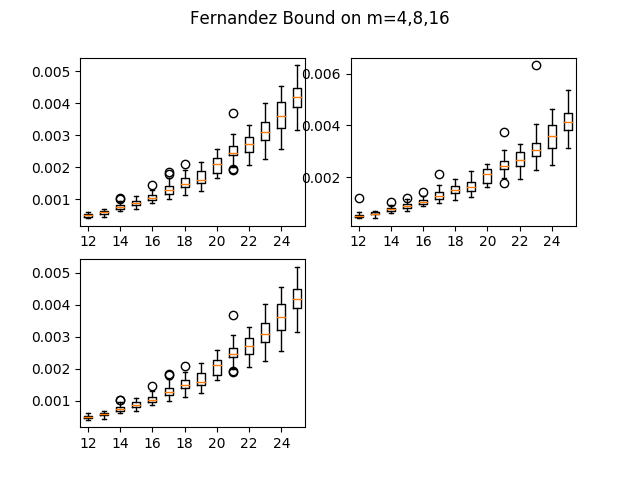}
\centering
\caption{The run times in seconds of successfully scheduled DAGs of different sizes using the Fernandez bound. Top left: m=4. Top right: m=8. Bottom: m=16.}
\label{fig:smallrunFB}
\end{figure}

\begin{figure}[htpb]
\includegraphics[width=8cm]{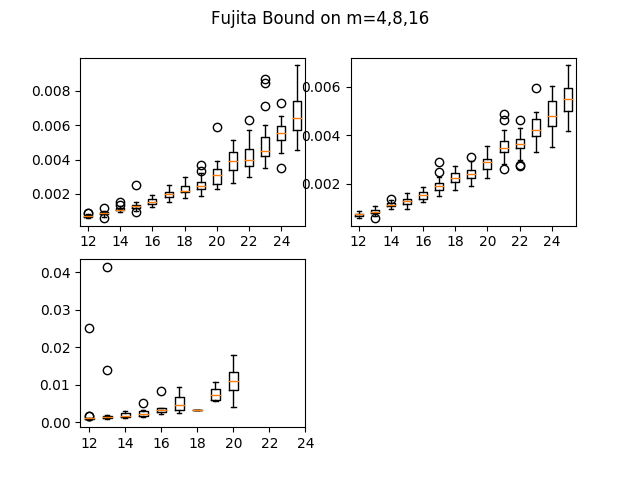}
\centering
\caption{The run times in seconds of successfully scheduled DAGs of different sizes using Fujita's bound. Top left: m=4. Top right: m=8. Bottom: m=16.}
\label{fig:smallrunFujita}
\end{figure}

\begin{figure}[htpb]
\includegraphics[width=8cm]{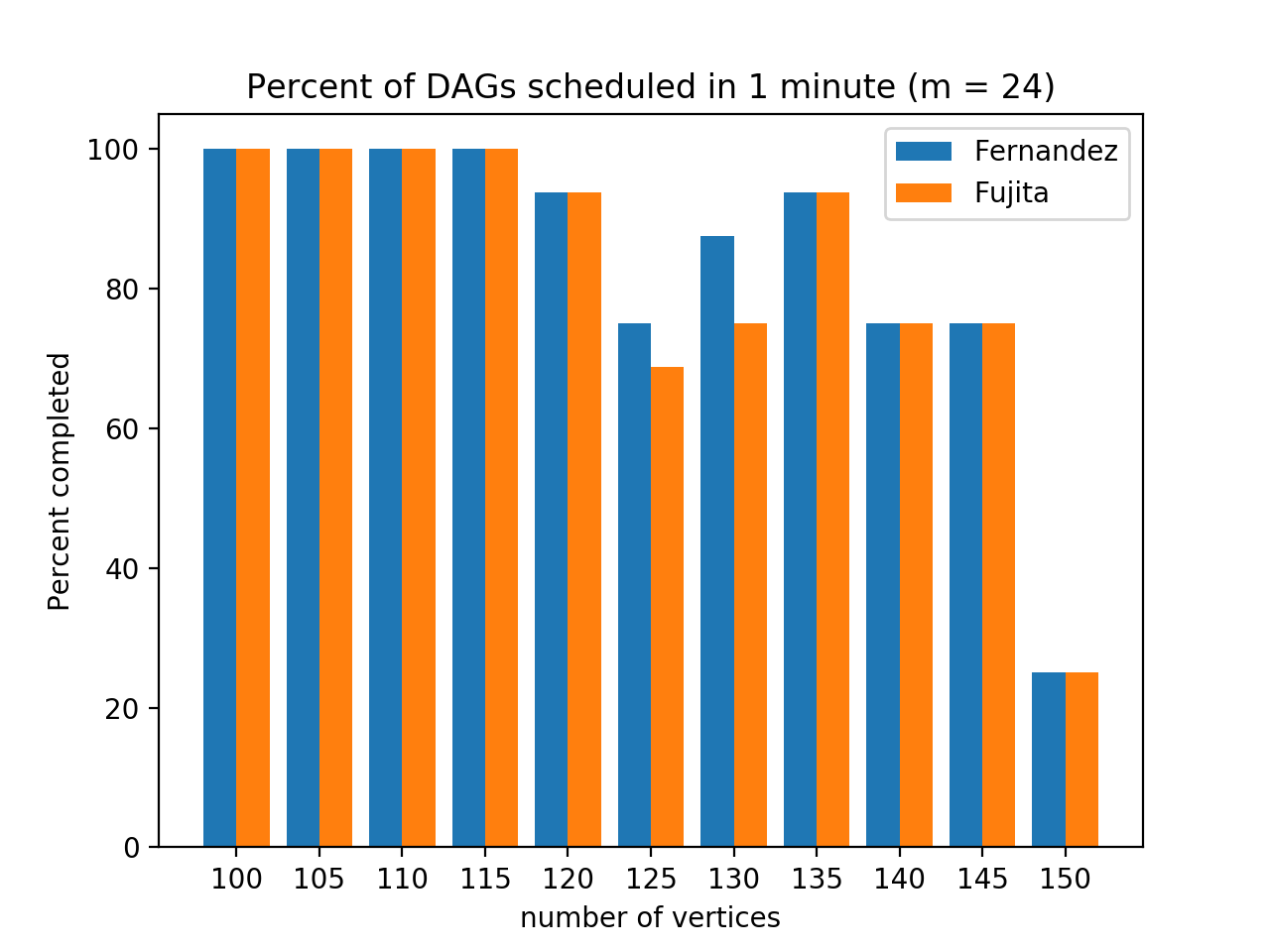}
\centering
\caption{The percent of DAGs of different sizes that were scheduled on 24 machines in less than one minute using the Fernandez bound and Fujita's binary search bound}
\label{fig:largecompleted24}
\end{figure}

\begin{figure}[htpb]
\includegraphics[width=8cm]{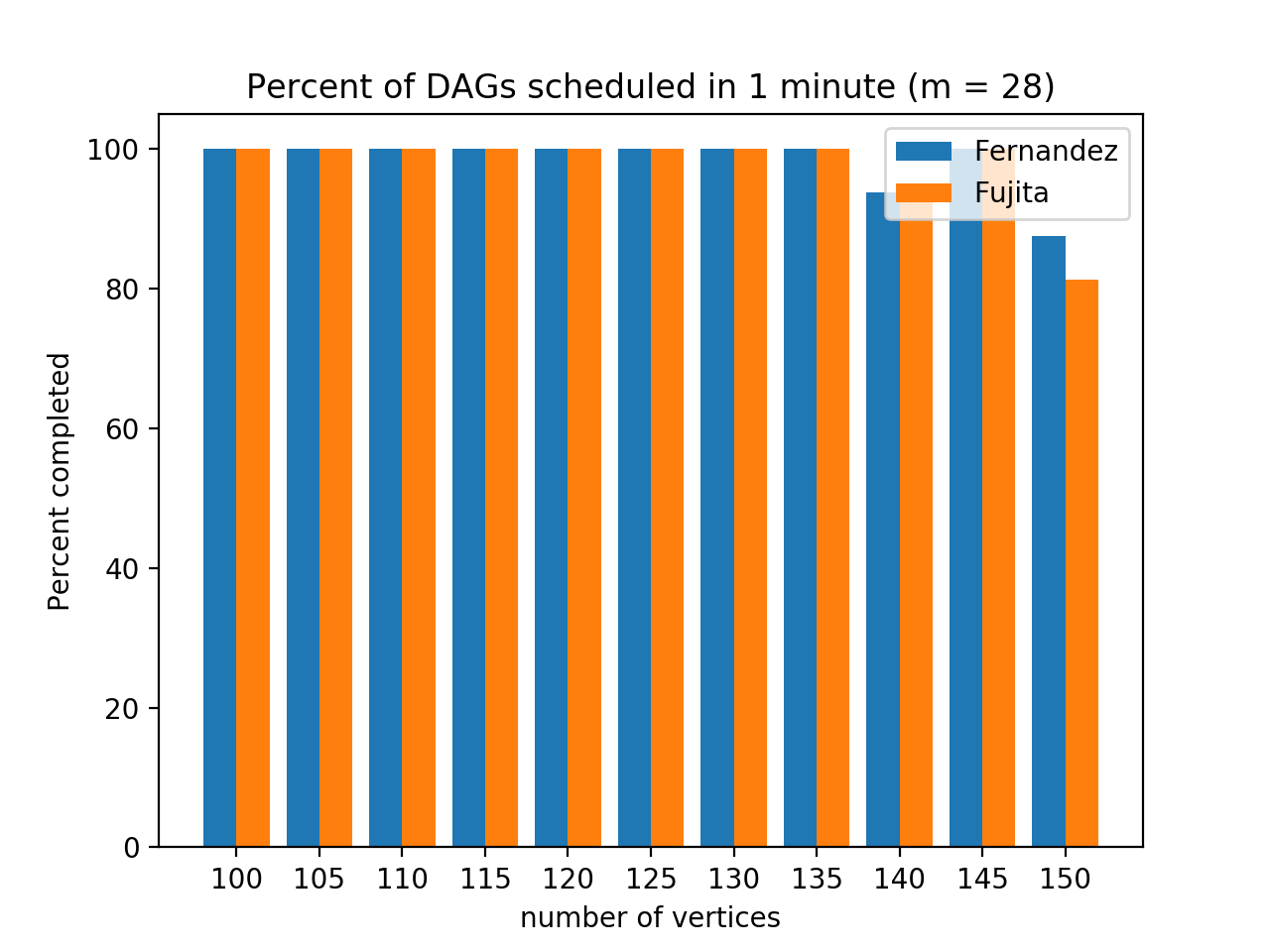}
\centering
\caption{The percent of DAGs of different sizes that were scheduled on 28 machines in less than one minute using the Fernandez bound and Fujita's binary search bound}
\label{fig:largecompleted28}
\end{figure}

\begin{figure}[htpb]
\includegraphics[width=8cm]{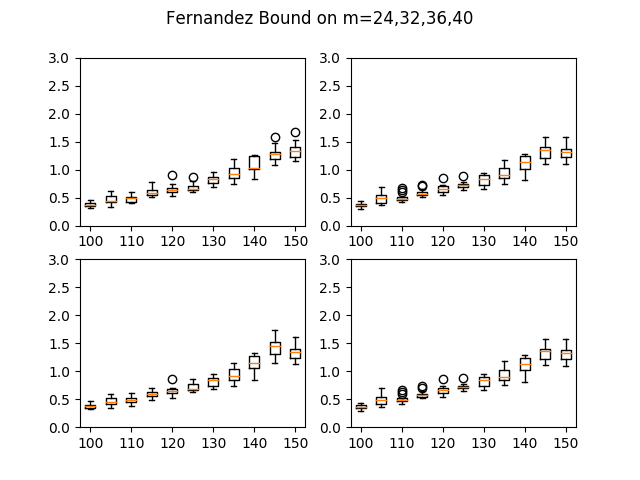}
\centering
\caption{The run times in seconds of successfully scheduled DAGs of different sizes using the Fernandez bound. Top left: m=24. Top right: m=32. Bottom left: m=36. Bottom right: m=40.}
\label{fig:largerunFB}
\end{figure}

\begin{figure}[htpb]
\includegraphics[width=8cm]{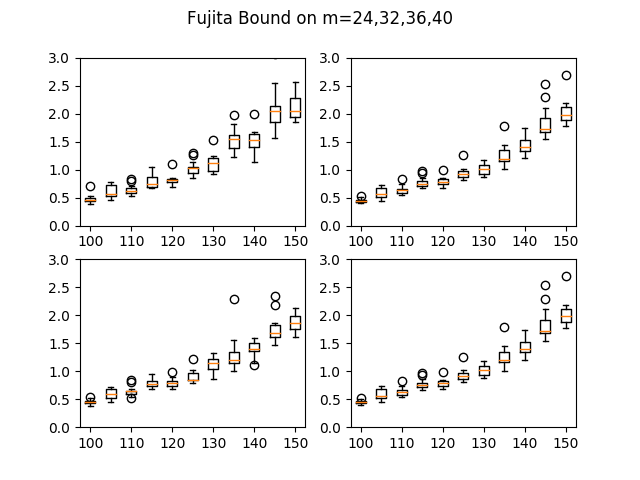}
\centering
\caption{The run times in seconds of successfully scheduled DAGs of different sizes using Fujita's bound. Top left: m=24. Top right: m=32. Bottom left: m=36. Bottom right: m=40.}
\label{fig:largerunFujita}
\end{figure}

Our goals in the experiments are to explore how the runtime of the implementation changes with the inputs to the problem and how Fujita's binary search method for lower bounding the makespan of partial solutions compares to using the Fernandez bound. The first experiment explores the runtime of the algorithm when finding schedules for 4, 8, and 16 machines on DAGs with between 12 and 25 vertices. Figure \ref{fig:smallcompleted4} shows what percent out of thirty DAGs of each size were able to be scheduled on four machines in less than the sixty allotted seconds. Unsurprisingly, the larger the DAG, the harder it is to schedule. However, we were surprised to see that Fujita's binary search bounding method performed worse than just using the Fernandez bound, since Fujita had claimed his method to be an improvement\cite{fujita}.

We were also surprised to find that increasing the number of machines made the scheduling problem easier, though upon reflection this makes sense because having more machines available gives the scheduler more flexibility to make different choices without making the schedule much worse, leading to a better lower bound early on in the execution. Figure \ref{fig:smallcompleted8} shows the percentage of DAGs successfully scheduled in under a minute for eight machines. These are the same DAGs as in Figure \ref{fig:smallcompleted4}, but with eight machines only the largest of the DAGs could not be scheduled. Scheduling for sixteen machines completes in under a minute for all thirty DAGs. For those DAGs that could be scheduled in under a minute, the amount of time each size DAG took to schedule is shown in Figures \ref{fig:smallrunFB} and \ref{fig:smallrunFujita} for the Fernandez and Fujita bounds, respectively. Note that for each DAG, either the DAG is represented in this figure or the DAG took more than sixty seconds to schedule.

The second experiment investigated the execution time of the algorithm for much larger DAGs. Sixteen DAGs each of sizes $100, 105, \dots, 150$ were scheduled on $24, 28, 32, 36, \text{ and } 40$ machines. Any fewer machines and even the 100 vertex DAGs timed out too much to be useful. Overall, the trends seen for large DAGs and large numbers of machines reflect the trends seen with the smaller numbers. Using the Fernandez bound was still more efficient than using Fujita's binary search bounding method, though the gap did seem to close a little bit. It is possible that with even larger graphs using Fujita's method would become beneficial. As with the smaller DAGs, using more machines continued to make the problem easier. Figures \ref{fig:largecompleted24} and \ref{fig:largecomleted28} show the percent of the large DAGs that were successfully scheduled in under a minute on 24 and 28 machines, respectfully. For 32 or more machines, all the DAGs could be scheduled in under a minute. Of those machines that could be scheduled in under a minute, the time it took to schedule each of the large DAG sizes is given in Figures \ref{fig:largerunFB} and \ref{fig:largerunFujita} for the implementation using the Fernandez bound and the implementation with Fujita's bound, respectively.

During development of our implementation we saw that Fujita's binary search bounding method does indeed produce lower bounds at least as good as the Fernandez bound. The only reason the Fernandez bound performs better in our experiments is that Fujita's bound is more computationally complex to calculate. Although the binary search procedure is requires only a number of steps logarithmic in the difference between the lower bound of the current partial schedule and the critical path length of the DAG, each one of those steps requires recomputing the minimum end times, maximum start times, and the minimum work density. Fujita presented a method for calculating the minimum work density in linear time, but our current implementation calculates it in quadratic time. It is therefore possible that reimplementing this calculation to run in linear time would make our implementation using Fujita's bound better than our implementation using the Fernandez bound.

\section{Future work}
One of the most interesting things about the experimental results is that DAGs seem to either be easy or hard to schedule, either taking at most a couple seconds to schedule or taking over sixty seconds. Although there were a few DAGs that took a larger amount of time under sixty seconds to schedule, they were rare. This phenomenon suggests that there might be some way to analyze DAGs and classify them as hard or easy for certain heuristics. If so, the branch and bound algorithm could statically or dynamically choose to use different heuristics for determining the next vertex from the ready set to reduce the number of hard cases.

There are also a number of more immediate ideas we would like to investigate. For example, we would like to quantify how many fewer partial schedules are evaluated when the lower bounding procedure is improved. If we knew how much an improvement in the lower bounding made a difference, we might be able to predict for which DAGs using a more expensive but more exact lower bounding procedure such as Fujita's binary search method would be beneficial.

Finally, we would like to further investigate and compare heuristic algorithms for DAG scheduling. One way we can do this is by halting the branch and bound algorithm after a fixed number of steps and returning the best schedule found so far. Another way is to multiply the lower bound at each step by $(1 + \varepsilon)$ to more aggressively prune the search tree. This would produce an approximation algorithm reaching $(1+\varepsilon)$OPT. It would interesting to compare the computation time of the algorithm using this approximation method compared to other approximation algorithms. Finally, we could investigate improving the branch and bound algorithm performance by implementing multiple list scheduling priority rules, evaluating them, and using them to select new vertices from the ready set in the branch and bound algorithm.

\section{Conclusions}

In this paper we analyze the Multiprocessors Scheduling Problem, and specifically the problem $Pn|prec|c_{max}$ in Graham notation. We describe several approaches used in the literature to solve this $NP$ hard problem. We first explore an approximation algorithm, and then an algorithm that finds the optimal result. In particualar, we derive the $(2-1/m)$OPT bound on the list scheduling algorithm proposed by Graham. We then analyze the Branch-and-Bound method proposed by Fernandez and Fujita, correcting two mistakes in Fujita's exposition of the algorithm.

We have implemented and numerically tested the Branch-and-Bound algorithm, with both the Fernandez bound and the Fujita bound. Experiments were performed on data generated with  RanGen, a tool specifically designed for benchmark tests of scheduling algorithms. With both bounds the algorithm obtains OPT in a few seconds on DAGs of size up to 150 nodes. Our tests demonstrated that Fujita does indeed produce better lower bounds than Fernandez in general. We show however, that this improvement does not justify the increase in computation time.

\bibliographystyle{abbrv}
\bibliography{sigproc}  

\end{document}